%% file: main.tex
\documentclass[a4paper,twocolumn,10pt]{quantumarticle}
\pdfoutput=1

\usepackage[utf8]{inputenc}
\usepackage[english]{babel}
\usepackage[T1]{fontenc}
\usepackage{amsmath}
\usepackage{hyperref}
\usepackage{tikz}
\usepackage{wrapfig}
\usepackage{cite}
\usepackage{amsmath,amssymb,amsfonts}
\usepackage{graphicx}
\usepackage{textcomp}
\usepackage{xcolor}
\usepackage{mathtools}
\usepackage{amsthm}
\newtheorem{theorem}{Theorem}
\usepackage[ruled,noend]{algorithm2e}
\usepackage{microtype}
\usepackage{float}
\usepackage{adjustbox}
\usepackage{booktabs,makecell,tabularx}
\usepackage{url}
\usepackage{booktabs}
\usepackage{caption}
\usepackage{subcaption}
\usepackage[braket, qm]{qcircuit}
\usepackage{braket}
\usepackage{tcolorbox}
\usepackage[frozencache]{minted}

\setminted[python]{
fontsize = \footnotesize,
}

\tcbset{colback=black!5!white,arc=2pt,boxrule=0.5pt}
\BeforeBeginEnvironment{minted}{\begin{tcolorbox}}
\AfterEndEnvironment{minted}{\end{tcolorbox}}

\theoremstyle{definition}
\newtheorem{definition}{Definition}[section]

\begin{document}

\title{Automatic quantum function parallelization and memory management in Qrisp}

\author{Raphael Seidel}
\email{raphael.seidel@iqm.tech}
\thanks{This work was originally published at \href{https://quantum-compilers.github.io/iwqc2024/}{IWQC 2024} while working at the Fraunhofer Institute for Open Communication Systems. At the time of ArXiv upload, the author works at IQM Quantum Computers.}
\affiliation{Fraunhofer Institute for Open Communication Systems, Berlin, Germany}
\affiliation{IQM Quantum Computers, Munich, Germany}

\maketitle

\begin{abstract}
Automated optimization of quantum programs has gathered significant attention amidst the recent advances of hardware manufacturers. In this work we introduce a novel data-structure for representing quantum programs called permeability DAG, which captures several useful properties of quantum programs across multiple levels of abstraction. Operating on this representation facilitates a variety of powerful transformations such as automatic parallelization, memory management and synthesis of uncomputation. More potential use-cases are listed in the outlook section. At the core, our representation abstracts away a class of non-trivial commutation relations, which stem from a feature called permeability. Both memory management and parallelization can be made sensitive to execution speed details of each particular quantum gate, implying our compilation methods are not only retargetable between NISQ/FT but even for individual device instances.
\end{abstract}

\input{sections/introduction}
\input{sections/permeability}
\input{sections/dag}
\input{sections/parallelization}
\input{sections/memory_management}
\input{sections/outlook}
\input{sections/summary}

\section*{Code availability}
\textit{Qrisp} is an open-source Python framework for high-level programming of quantum computers.
The source code is available in \href{https://github.com/eclipse-qrisp/Qrisp}{https://github.com/eclipse-qrisp/Qrisp}.

\bibliographystyle{quantum}
\bibliography{sources}

\appendix
\input{sections/appendix}

\end{document}

%% file: sections/introduction.tex
\section{Introduction}
\label{sec:introduction}
The promising advances of recent quantum hardware manufacturers \cite{QuEra_2023, quantinuum_24, oxford_ionics_2024} spark realistic hopes of achieving large scale fault tolerant quantum computations within less than a decade. As the scale of treatable problems grows, the implementation of more elaborate algorithms faces a variety of challenges. One of them is the lack of systematic programming abstractions, which are indispensable to all of modern software engineering (such as modular development, systematic testing or code introspection for debugging). Within a recent work \cite{seidel_2024_qrisp} we gave an overview over our quantum programming framework \textit{Qrisp}, which tackles these challenges. A core advantage of \textit{Qrisp} is the possibility to modularize algorithm development, which is rooted in the feature of automatic quantum memory management. The present work introduces the structures and algorithms behind this feature. Additionally, we present a related algorithm, which automatically reduces the circuit runtime by establishing parallelism between quantum function calls.
\begin{figure}
    \centering
    \includegraphics[width=\linewidth]{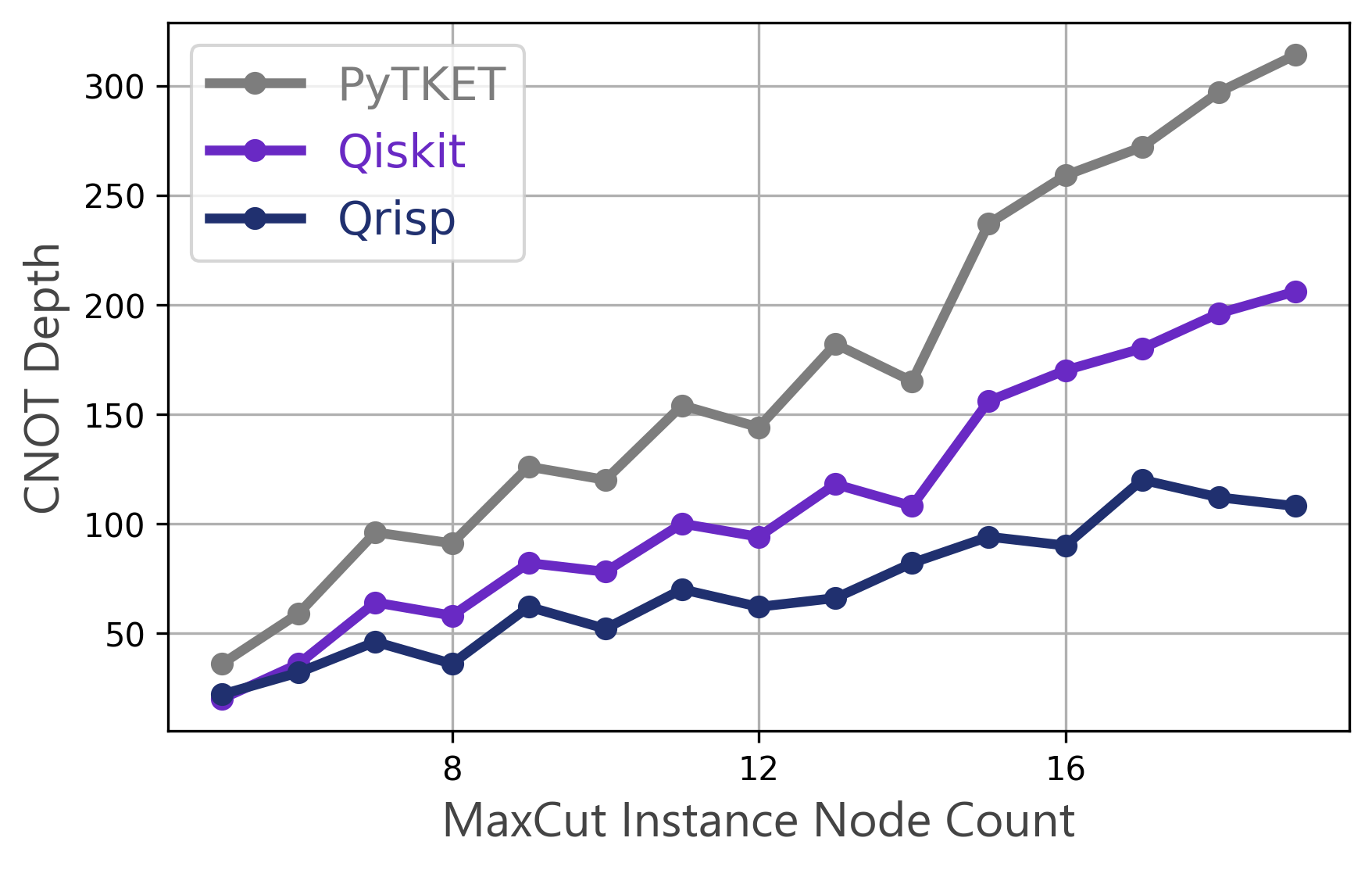}
    \caption{A plot of the resulting CNOT depth after applying several circuit optimizers (including the presented parallelization algorithm) to MaxCut problem circuits \cite{farhi_2014}. The problem instances are constructed by sampling a random Erdös–Rényi graph \cite{gilbert_1959} with $p = 0.5$ for each node count. To optimize the circuits in Qiskit, we leverage the \texttt{transpile} function with optimization level 3. For PyTKET, we used the \texttt{FullPeepholeOptimise} optimization pass, which is advertised as a one fits all procedure. We note that our method works particularly well for MaxCut problems, implying general algorithms will most likely see a more moderate improvement in circuit depth.}
    \label{fig:parallelization_plot}
\end{figure}
Both of these algorithms are based on a novel DAG representation called \textit{permeability DAG}, which leverages non-trivial commutation relations among quantum function calls. In its essence, the permeability DAG captures information about which functions leave their inputs ``constant'' in a rigorously defined notion of ``constant'', which we call \textit{permeability}. We show that if there are several functions operating in a permeable way on the same set of qubits, these functions can be interchanged. The permeability DAG therefore represents quantum algorithms with the mentioned commutation relations hard coded into its internal structure. As such, a much broader class of transformations is possible by operating on this representation.\\
A subset of the mentioned type of commutation relations have already been realized in the form of a DAG within the context of automatic synthesis of uncomputation circuits \cite{paradis_2021} (``Unqomp''). In this work we refine the Unqomp DAG to facilitate more general uncomputations and also the above mentioned compilation algorithms for memory management and parallelization.

%% file: sections/permeability.tex
\section{Permeability}
\label{sec:permeability}
Permeability is a property of (composite) quantum gates, that was initially introduced by us in \cite{Seidel_2023_uncomputation} and gives a formal meaning of what ``constant'' could mean for a quantum function. Even though the original purpose of the concept was restricted to automatic synthesis of uncomputations, we realized that the scope of applications is much broader. At the core, permeability information allows the compiler to leverage non-trivial commutation relations. To understand how this works in detail, we recall the definition given in \cite{Seidel_2023_uncomputation}.

\begin{tcolorbox}
\begin{definition}[Permeability]
A (composite) quantum gate $U$ is called Z-permeable in qubit $i$, if it commutes with the $Z$ operator on this qubit.
\begin{align}
   \text{U} \text{ is permeable on qubit i} \Leftrightarrow \text{U} \text{Z}_i = \text{Z}_i \text{U}
\end{align}
Similarly, $U$ is called X-permeable in qubit $i$ if it commutes with the $X$ operator on this qubit.
\end{definition}
\end{tcolorbox}

For Z-permeability we gave a proof of the following theorem in \cite{Seidel_2023_uncomputation}. For the readers convenience, it can be found in the appendix. In this work we expand the statement to X-permeability.
\begin{tcolorbox}
\begin{theorem}[Commutativity theorem]
\label{commutativity_theorem}
Let $U$ and $V$ be $n$ and $m$ qubit operators, respectively. If $U$ is Z-permeable (X-permeable) in its last $p$ qubits and $V$ is Z-permeable (X-permeable) in its first $p$ qubits, then the two operators commute if they intersect only on these qubits:
\begin{align}
\begin{aligned}
(U \otimes \mathbbm{1}^{\otimes m-p})& (\mathbbm{1}^{\otimes n-p} \otimes V)\\
&=\\
(\mathbbm{1}^{\otimes n-p} \otimes V)& (U \otimes \mathbbm{1}^{\otimes m-p})
\end{aligned}
\end{align}
\end{theorem}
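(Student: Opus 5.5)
The plan is to use the fact that an operator commutes with the single-qubit Pauli $Z_i$ exactly when it is block diagonal with respect to the computational basis of qubit $i$. I would first prove the Z-permeable case by an explicit block decomposition, and then reduce the X-permeable case to it by a Hadamard conjugation.

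\emph{Step 1: block structure of Z-permeable operators.} Write $P_0=\ket{0}\bra{0}$ and $P_1=\ket{1}\bra{1}$, so that $Z=P_0-P_1$ and $\mathbbm{1}=P_0+P_1$. If $U$ commutes with $Z_i$, it also commutes with $P_b = (\mathbbm{1}+(-1)^b Z)/2$ on qubit $i$. Hence $U=\sum_{b} P_b^{(i)} U P_b^{(i)}$, and
\begin{align}
U=\sum_{b\in\{0,1\}} P_b\otimes U_b
\end{align}
with $U_b$ acting on the remaining qubits. Iterating this over the $p$ shared qubits, which is legitimate because each $P_b^{(i)}$ commutes with $U$ and the projectors on distinct qubits commute, gives
\begin{align}
U=\sum_{x\in\{0,1\}^p} U_x\otimes \ket{x}\bra{x}, \qquad V=\sum_{y\in\{0,1\}^p}\ket{y}\bra{y}\otimes V_y,
\end{align}
where $U_x$ acts on the first $n-p$ qubits and $V_y$ on the last $m-p$ qubits.

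\emph{Step 2: compute both products.} Substituting,
\begin{align}
(U\otimes\mathbbm{1})(\mathbbm{1}\otimes V)=\sum_{x,y} U_x\otimes \ket{x}\braket{x|y}\bra{y}\otimes V_y=\sum_x U_x\otimes\ket{x}\bra{x}\otimes V_x .
\end{align}
The product in the reverse order gives the same expression, because the only nontrivial factors act on disjoint tensor slots and the middle slot contributes $\ket{y}\braket{y|x}\bra{x}=\delta_{xy}\ket{x}\bra{x}$ either way. So the two products coincide.

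\emph{Step 3: the X case.} Let $H_p$ denote Hadamards on the $p$ shared qubits. Since $HXH=Z$, if $U$ is X-permeable on those qubits then $\tilde U=H_pUH_p$ is Z-permeable there, and likewise for $\tilde V$. Conjugating the desired identity by $\mathbbm{1}^{\otimes n-p}\otimes H_p\otimes\mathbbm{1}^{\otimes m-p}$ turns it into the Z-statement for $\tilde U,\tilde V$, which Step 2 already proves. Equivalently, one can rerun Steps 1 and 2 with the $\ket{\pm}$ projectors.

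The main obstacle is Step 1. One has to justify that commuting with each $Z_i$ separately yields the joint decomposition over all $p$ qubits. This follows from the commuting family of projectors $\prod_i P_{x_i}^{(i)}$, each of which commutes with $U$ and which together sum to the identity. Once this is in place, the rest is bookkeeping of the tensor slots.
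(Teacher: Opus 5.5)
Your proposal is correct and follows the paper's proof essentially step for step. Like the paper, you first establish the block decomposition $U=\sum_x U_x\otimes\ket{x}\bra{x}$ (the paper's Theorem~\ref{expansion_theorem}), then multiply the two block forms to get $\sum_x U_x\otimes\ket{x}\bra{x}\otimes V_x$ in either order, and finally reduce the X-case to the Z-case by Hadamard conjugation on the shared qubits. The one cosmetic difference is how the decomposition is obtained: the paper expands $Z_0U-UZ_0$ for a single qubit and then inducts, whereas you use the commuting spectral projectors of the $Z_i$ directly, which is a slightly cleaner route to the same lemma.
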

\end{tcolorbox}
The proof for the statement in the case of X-permeability can be found in Appendix \ref{sec:x_com_proof}.\\
According to the results proven in \cite{Seidel_2023_uncomputation}, determining the permeability status for arbitrary gates is possible in linear time (in the unitary size) if the unitary is known. If the gate $U$ is composite and only acting via Z-permeable (X-permeable) gates on the qubit $q$, Z-permeability (X-permeability) can be inferred without the unitary. Furthermore, in \textit{Qrisp} it is possible for the programmer to annotate function definitions regarding their permeability status, implying in practice permeability information very rarely has to be inferred via unitary calculation.
To highlight the prevalence of permeability features, we now list some common gates, which are permeable.
\begin{itemize}
    \item Any RZ (RX) gate is Z-permeable (X-permeable).
    \item Z, P, CZ, CP, CRZ, RZZ gates are Z-permeable in all qubits.
    \item Any Pauli-X-gadget \cite{Cowtan_2020} ($U(\phi) = \text{exp}(i\phi \otimes_{i=0}^n X_i)$) is X-permeable in all qubits. The same applies to Pauli-Z-gadgets.
    \item Global M\o{}lmer-S\o{}rensen gates \cite{Maslov_2018} are X-permeable (or Z depending on the hardware \cite{W_lk_2017}) in all qubits.
    \item Any phase polynomial \cite{Amy_2018} and any diagonal Hamiltonian evolution is Z-permeable in all inputs.
    \item Any (multi-)controlled gate is Z-permeable in any control qubit. If the base gate is Z-permeable (X-permeable), the controlled version is also Z-permeable (X-permeable) on that qubit.
    \item Any gate $U_f$ that realizes an out-of-place classical function $f$ in superposition, by acting as 
    \begin{align}
    U_f \ket{x_0}\dotsb\ket{x_n}\ket{0} = \ket{x_1}\dotsb\ket{x_n}\ket{f(x_0,\dotsc,x_n)}
    \end{align}
    is Z-permeable in all input qubits. This especially applies to all arithmetic functions.
    \item Quantum-quantum (modular) in-place adders are permeable in the input that is not operated on.
\end{itemize}

%% file: sections/dag.tex
\section{The permeability DAG}
\label{sec:dag}
DAGs (directed acyclic graphs) have found manifold application in quantum compilation \cite{Joseph_2023, promponas_2024, meuli_2022}. In their essence, DAGs are so useful because they can capture equivalence classes of reorderings of a given sequence.
The DAG representation we construct here is based on the ideas presented in the Unqomp \cite{paradis_2021}, however enriched by the concepts of Z/X-permeability. This enables the permeability DAG to express equivalence classes of quantum circuits, that are equivalent according to the commutation relations induced by Theorem~\ref{commutativity_theorem}. To leverage these, we extract circuit reorderings (with an invariant unitary) by determining a topological sort of the DAG. Topological sorting algorithms are however a wide class of techniques, which allow us to select a procedure, that favors certain characteristics of the circuit (like low circuit depth).
\\
\begin{figure*}[t!]
    \centering
    \begin{subfigure}[b]{0.5\textwidth}
        \centering
        \scalebox{1}{
        \Qcircuit @C=1.0em @R=0.2em @!R { \\
        	 	\nghost{{q0} :  } & \lstick{{q_0} :  } & \ctrl{1} & \ctrl{2} & \ctrl{3} & \qw & \qw\\
        	 	\nghost{{q1} :  } & \lstick{{q_1} :  } & \gate{\mathrm{Y}} & \qw & \qw & \qw & \qw\\
        	 	\nghost{{q2} :  } & \lstick{{q_2} :  } & \qw & \gate{\mathrm{Y}} & \qw & \qw & \qw\\
        	 	\nghost{{q3} :  } & \lstick{{q_3} :  } & \qw & \qw & \gate{\mathrm{Y}} & \qw & \qw\\
        \\ }}
        \caption{\label{fig:simple_circuit}}
    \end{subfigure}%
    \begin{subfigure}[b]{0.5\textwidth}
        \centering
        \scalebox{1.0}{
        \Qcircuit @C=1.0em @R=0.2em @!R { \\
        	 	\nghost{{q0} :  } & \lstick{{q0} :  } & \ctrl{1} & \ctrl{2} & \ctrl{3} & \targ & \gate{\mathrm{X}} & \targ & \qw & \qw\\
        	 	\nghost{{q1} :  } & \lstick{{q1} :  } & \gate{\mathrm{Y}} & \qw & \qw & \ctrl{-1} & \qw & \qw & \qw & \qw\\
        	 	\nghost{{q2} :  } & \lstick{{q2} :  } & \qw & \gate{\mathrm{Y}} & \qw & \qw & \qw & \ctrl{-2} & \qw & \qw\\
        	 	\nghost{{q3} :  } & \lstick{{q3} :  } & \qw & \qw & \gate{\mathrm{Y}} & \qw & \qw & \ctrl{-1} & \qw & \qw\\
        \\ }}
        \caption{\label{fig:less_simple_circuit}}
    \end{subfigure}
        \begin{subfigure}[b]{0.5\textwidth}
        \centering
        \includegraphics[width = 0.65\textwidth]{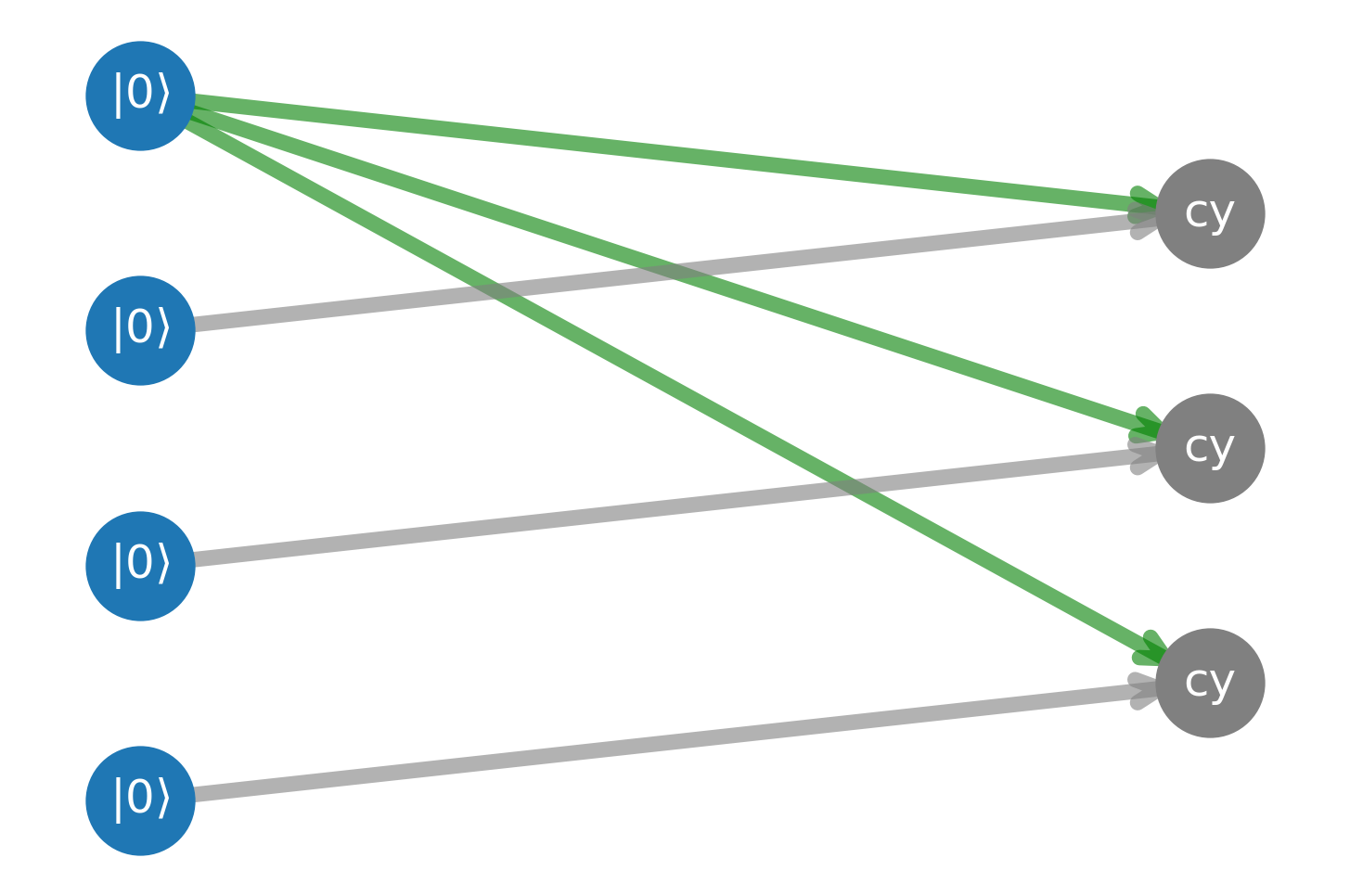}
        \caption{\label{fig:simple_dag}}
    \end{subfigure}%
        \begin{subfigure}[b]{0.5\textwidth}
        \centering
        \includegraphics[width = 0.65\textwidth]{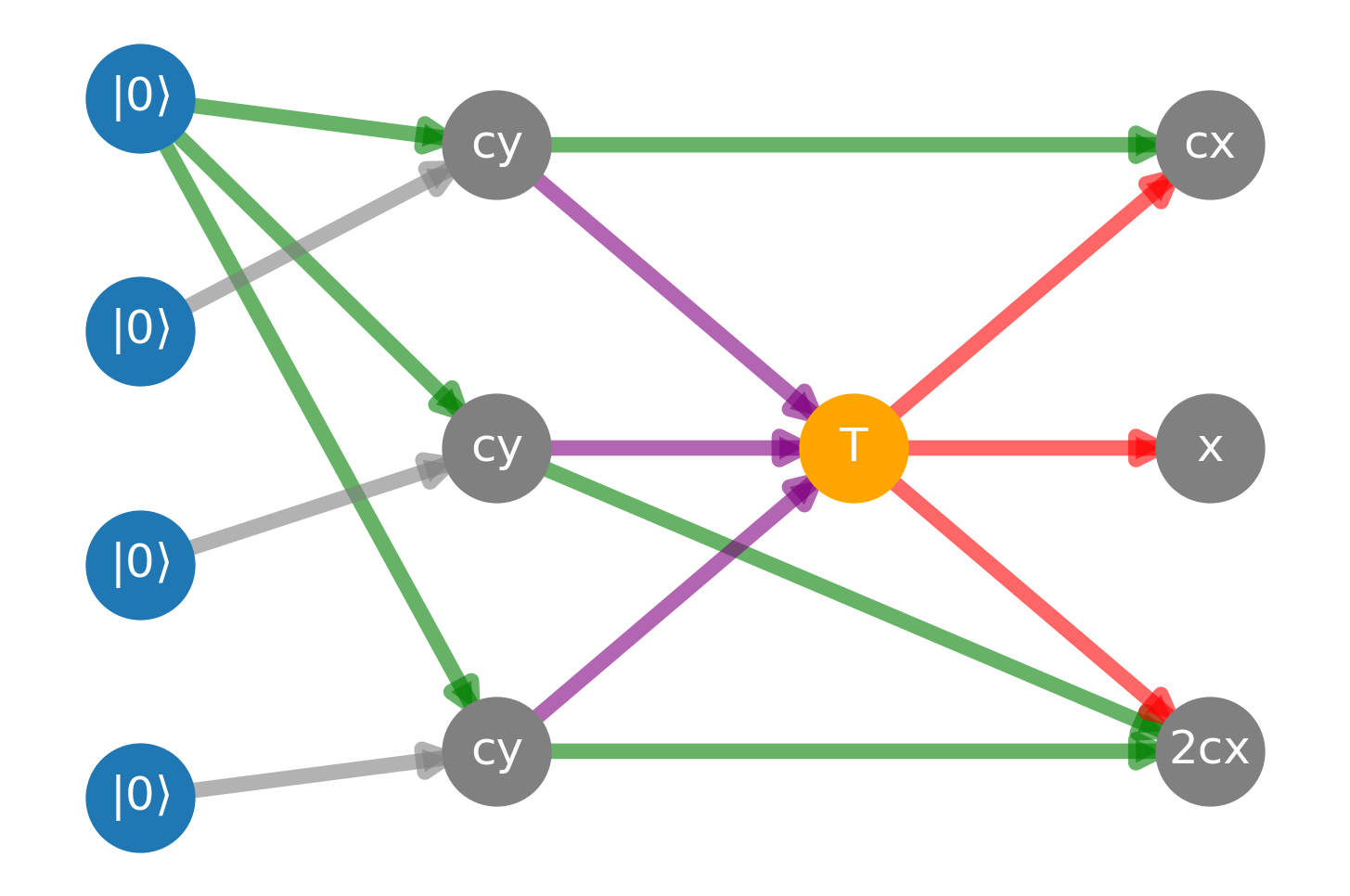}
        \caption{\label{fig:less_simple_dag}}
    \end{subfigure}
    \caption{\label{fig:dag_example}\ref{fig:simple_circuit}, \ref{fig:less_simple_circuit} Simple example circuits to demonstrate how the permeability DAG is built up. \ref{fig:simple_dag} The DAG constructed according to the rules given in Section~\ref{sec:dag}. The CY gates are Z-permeable on their control, so they form a streak. Note that these gates commute according to Theorem \ref{commutativity_theorem}, so any reordering of these gates result in a circuit with the same semantics (i.e. unitary). A reordered version of the circuit however induces the same DAG, therefore the DAG indeed represents an equivalence class of reorderings. \ref{fig:less_simple_dag} The Z-permeability streak of the CY gates is ended by applying a CX gate on $q_0$, which is X-permeable on that qubit. According to the DAG construction rules, the CX gate induces a terminator node connected with (purple) anti-depedency edges. The following streak of X-permeable operations will therefore be guaranteed to end up behind the Z-streak in a topological sort.
    For more examples please check Appendix \ref{sec:p_dag_code} for a simple snippet producing pictures like these.}
\end{figure*}

To elaborate the permeability DAG construction procedure, we start by introducing the types of nodes that can appear.
\begin{itemize}
    \item \textbf{Instruction nodes} represent quantum instructions (like a CX gate).
    \item \textbf{Allocation nodes} represent qubit resources being allocated.
    \item \textbf{Deallocation nodes} represent qubits being deallocated. The information about which qubits can be deallocated is collected during \textit{Qrisp} execution.
    \item \textbf{Terminator nodes} represent the end of a so called streak which will be elaborated shortly.
\end{itemize}
To construct a permeability DAG from a quantum circuit, we begin by adding an allocation node for each qubit in the quantum circuit. Furthemore, for every (composite) quantum gate and every deallocation we add another node to the graph. To connect the nodes we now introduce the types of edges:
\begin{itemize}
    \item \textbf{Z-edge} (green).
    \item \textbf{X-edge} (red).
    \item \textbf{Neutral edge} (grey).
    \item \textbf{Anti-dependency edge} (purple).
\end{itemize}
For each instruction node, we check the permeability status on the corresponding qubits ((de)allocation nodes are treated to be neutral) and connect the edges. Connecting the edges is done according to the following rules. Let $q$ be a qubit, that the subsequent gates $U_1,U_2$ are operating on.
\begin{enumerate}
    \item If $U_1$ and $U_2$ have differing permeability status in $q$, a Z-edge (X-edge) is created from $U_1$ to $U_2$ if $U_2$ is Z-permeable (X-permeable) in $q$. If neither is the case, a neutral edge is used instead.
    \item If $U_1, U_2$ are both Z-permeable (X-permeable), instead inserting a Z-edge (X-edge) from $U_1$ to $U_2$, we let the edge start in $U_0$, where $U_0$ is the node, which starts the edge to $U_1$ (the ``parent''). A sequence of more than one gate $U_1,..U_n$ acting on $q$ with the same permeability is therefore connected to the same node, which we call a ``streak''.
    \item If $U_{n+1}$ ends the streak $U_1,..U_n$, that is $U_{n+1}$ has a different permeability type in $q$ (or is neutral), a terminator node is inserted and every node in the streak receives an anti-dependency edge pointing towards the terminator node. Finally an edge representing the permeability of $U_{n+1}$ in $q$ is inserted pointing away from the terminator.
\end{enumerate}
To get a thorough understanding, please check the example in Fig.~\ref{fig:dag_example}.
We will now discuss the motivation behind these rules. Since the DAG will undergo a topological sorting process, rule 1 enforces that two subsequent gates acting on the same qubit will appear in the correct order upon linearization. Rule 2 is designed to reflect the commutation relations from Theorem~\ref{commutativity_theorem}. Since all gates of the streak commute, they are all connected to the same initial node $U_0$. Therefore, any valid reordering of the streak would yield the same DAG. The anti-depedency edges in rule 3 ensure that in a linearization, the final gate ending the streak will indeed always be executed last.

\subsection{Comparison to the ZX-Calculus}
We see the following similarities/differences between our construction and the ZX-Calculus\cite{Coecke_2011}:
\begin{itemize}
    \item Both representations are based around certain properties of quantum gates in the Z/X-base.
    \item The commutativity of primitive operations expressed by the permeability DAG is a special case of the Spider Fusion Rule.
    \item ZX-Calculus expressions are undirected graphs and can contain loops.
    \item The permeability DAG contains structural information in the form of (de)allocation nodes and terminator nodes that are of a fundamentally different type compared to the instruction node.
    \item The permeability DAG can abstract low level implementations and represent properties of high-level quantum functions. For instance, the commutativity in the input of subsequent adders is not immediately obvious in the ZX-calculus.
\end{itemize}

\subsection{Comparison to the Unqomp-Graph}
The following similarities/differences are found with respect to the Unqomp DAG \cite{paradis_2021}:
\begin{itemize}
    \item The Unqomp DAG only differentiates between control, target and anti-depedency edges. Conceptually, Z-edges can be identified with control edges, target edges are either X or neutral. Anti-depedency edges serve a similar purpose.
    \item With the above identifications, the permeability DAG can be used to synthesize automatic uncomputation.
    \item The permeability DAG contains the terminator node. The purpose of this node is mainly to separate two subsequent streaks from each other, which can't happen in the Unqomp DAG, since only the control edge type can have streaks.
    \item The Unqomp DAG contains allocation nodes but no deallocation nodes. For the permeability DAG, deallocation nodes play an essential role when it comes to memory management tasks.
\end{itemize}

%% file: sections/parallelization.tex
\section{Topological Sorting}
\label{sec:topo_sort}

\begin{figure*}[t!]
    \begin{subfigure}[b]{1\textwidth}
        \hspace{1.5em}
        \input{graphics/rzz_chain}
        \caption{\label{fig:rzz_chain}}
    \end{subfigure}\\
    \begin{subfigure}[b]{0.3\textwidth}
        \input{graphics/rzz_chain_shrinked}
        \caption{\label{fig:rzz_chain_shrinked}}
    \end{subfigure}\hspace{1.5em}%
    \begin{subfigure}[b]{0.7\textwidth}
        \input{graphics/qaoa_circuit_par}
        \caption{\label{fig:qaoa_circuit_par}}
    \end{subfigure}\\
    \begin{subfigure}[b]{1\textwidth}
        \hspace{5em}
        \input{graphics/qaoa_circuit}
        \caption{\label{fig:qaoa_circuit}}
    \end{subfigure}
\caption{\label{fig:parallelization}\ref{fig:rzz_chain} A circuit describing a descending chain of RZZ gates, where the T-depth scales like $\mathcal{O}(n)$ in circuit size. Such circuits are a common occurence in QFT implementations or draper style adders \cite{draper_2000}. \ref{fig:rzz_chain_shrinked} The resulting circuit after applying our parallelization procedure optimizing for T-depth, leveraging the commutation relations among the CNOT gates. The T-depth is now constant regardless of the circuit scale. \ref{fig:qaoa_circuit} An example describing a $p = 1$ Max-Cut QAOA circuit. The CNOT depth is 10. \ref{fig:qaoa_circuit_par} The QAOA circuit from \ref{fig:qaoa_circuit} after applying the parallelization algorithm. The resulting CNOT depth is 6, which amounts to a 40\% improvement. Note that our implementation also works for gates with unspecified parameters, since the permeability features of the RZZ gates are independent of the particular choice of parameters. In a VQA setting, our algorithm can therefore be applied once (before the optimization loop) instead of multiple times (before each iteration).}
\end{figure*}
Now that we have the tools to build up the permeability DAG from arbitrary quantum circuits, we can start extracting equivalent reorderings of the circuit. We do this by applying a topological sorting algorithm, which is designed in such a way that certain characteristics of the circuit are improved. In this section, we describe three such algorithms for the following purposes:
\begin{enumerate}
    \item Automatic circuit parallelization in Section \ref{sec:parallelization}.
    \item Memory management in Section \ref{sec:memory_management}.
\end{enumerate}
\subsection{Parallelization}
\label{sec:parallelization}
The topological sorting algorithm for circuit parallelization is based on a adapted version of Kahn's algorithm \cite{kahn_1962}. For reader convenience, we describe the algorithm shortly:

\begin{algorithm}[h]
\caption{Kahn's Algorithm}
\KwIn{A directed acyclic graph $G = (V, E)$}
\KwResult{A list $L$ of nodes in topologically sorted order}
\SetKwFunction{Enqueue}{Enqueue}
\SetKwFunction{Dequeue}{Dequeue}

\BlankLine
\textbf{Initialize the Queue}\;
Identify all nodes with an in-degree of 0\;
Add these nodes to a queue $Q$\;

\BlankLine
\textbf{Process the Queue}\;
\While{$Q$ is not empty}{
    $node \leftarrow$ \Dequeue{$Q$}\;
    \For{each $neighbor$ connected by an outgoing edge from $node$}{
        Decrease the in-degree of $neighbor$ by 1\;
        \If{in-degree of $neighbor$ becomes 0}{
            \Enqueue{$Q$, $neighbor$}\;
        }
    }
}

\BlankLine
\textbf{Check for Cycles}\;
\If{any nodes still remain in the graph with a non-zero in-degree}{
    \textbf{Error:} The graph has a cycle, topological sort not possible\;
}

\BlankLine
\textbf{Output the Result}\;
The sequence of nodes removed from the queue represents the topological order of the graph\;

\end{algorithm}

Our hook to shaping the algorithm for our purpose lies in the dequeuing step. Since we can choose an arbitrary node from the queue, we can now deploy a heuristic for picking a node that favors circuit depth. Our strategy here is to rate all dequeuing options according to a certain cost metric $\mathcal{C}$ and choose the node with the lowest cost. We evaluated several cost metrics but focus our description on the one which produced the best results.\\
To describe the cost metric, we first need an additional concept:
\begin{tcolorbox}
\begin{definition}[Dynamic Qubit Depth]
    Given is a quantum circuit $Q$ described by a sequence of gates $(U_i)_{0\leq i < n}$ and a sequence of durations $(t_i)_{0\leq i < n}$, which indicates how long each gate would take to execute on a given physical device. The \textbf{dynamic qubit depth} of the qubit $k$ is the time $D_k(Q) \in \mathbb{R}$, which is required to execute the gate sequence until the last gate operating on qubit $k$.
\end{definition}
\end{tcolorbox}
Determining the dynamic qubit depth can be achieved with a tetris-like construction and is a straightforward task. For the purpose of parallelization, the values for each qubits can be cached and reused in the next iteration, reducing classical resource requirements significantly.\\
Building on dynamic qubit depth allows our algorithm to optimize quantum circuits not only towards a certain device class (e.g. NISQ/FT) but even towards the particular device instances by feeding the gate-timings.
\\
Using the above concepts, our cost function is
\begin{align}
\begin{aligned}
    \mathcal{C}(i, t_i, Q) = & \text{max}(D_k(Q) | k \in  \text{QB}(U_i) )\\ + &t_i/(t_{max}+1)
\end{aligned}
\end{align}
where
\begin{itemize}
    \item $Q$ is the circuit that has been produced by the previous dequeuing events.
    \item $\text{QB}(U_i)$ is the set of qubit indices that the gate $U_i$ is operating on.
    \item $t_{max} = \text{max}(t)$ is the maximum amount of time a gate can take.
\end{itemize}
We motivate the choice of these terms. Since all participating qubits of $U_i$ need to be finished with their final gate, the first term essentially determines the time when $U_i$ can be executed. In a situation where this time is equivalent for two separate gates, the second term makes sure that the faster gate is executed first.
For an example of the effect of the parallelization procedure please refer to Fig.~\ref{fig:parallelization}.
\subsubsection{Performance analysis}
The complexity of Kahn's algorithm is $\mathcal{O}(V+E)$ \cite{kahn_1962} where $V$ is the amount of vertices and $E$ is the number of edges. Applied to the permeability DAG, we infer that every qubit port for each gate can amount for at most two edges (one red/green/grey edge and one anti-dependency edge). Assuming that the amount of qubit ports per gate is bounded, we therefore deduce that the complexity for the parallelization procedure is $\mathcal{O}(N)$ where $N$ is the amount of gates, which gives our method a very favorable scaling even for problem instances with a practically relevant scale.\\
The \textit{Qrisp} implementation of the algorithm is based on the high performance computing framework \textit{Numba} \cite{numba_2015}, which ensures that even for very large circuits the parallelization step can be executed with barely any delay. Please refer to Fig.~\ref{fig:parallelization_plot} for a benchmark of the algorithm applied to QAOA circuits. Apart from the smallest instance our method improves the second best method by approximately 33\%. Furthermore, our implementation is faster by approximately one order of magnitude compared to the second fastest implementation. For further benchmarks of the technique please check the Appendix of \cite{seidel_2024_qrisp} where a Shor implementation has been optimized for T-depth.

%% file: graphics/rzz_chain.tex
\scalebox{1.0}{
\Qcircuit @C=1.0em @R=0.2em @!R { \\
	 	\nghost{{qb\_72} :  } & \lstick{{q_0} :  } & \ctrl{1} & \qw & \ctrl{1} & \ctrl{2} & \qw & \ctrl{2} & \ctrl{3} & \qw & \ctrl{3} & \ctrl{4} & \qw & \ctrl{4} & \qw & \qw\\
	 	\nghost{{qb\_73} :  } & \lstick{{q_1} :  } & \targ & \gate{\mathrm{R_Z}\,(\mathrm{\gamma_0})} & \targ & \qw & \qw & \qw & \qw & \qw & \qw & \qw & \qw & \qw & \qw & \qw\\
	 	\nghost{{qb\_74} :  } & \lstick{{q_2} :  } & \qw & \qw & \qw & \targ & \gate{\mathrm{R_Z}\,(\mathrm{\gamma_1})} & \targ & \qw & \qw & \qw & \qw & \qw & \qw & \qw & \qw\\
	 	\nghost{{qb\_75} :  } & \lstick{{q_3} :  } & \qw & \qw & \qw & \qw & \qw & \qw & \targ & \gate{\mathrm{R_Z}\,(\mathrm{\gamma_2})} & \targ & \qw & \qw & \qw & \qw & \qw\\
	 	\nghost{{qb\_76} :  } & \lstick{{q_4} :  } & \qw & \qw & \qw & \qw & \qw & \qw & \qw & \qw & \qw & \targ & \gate{\mathrm{R_Z}\,(\mathrm{\gamma_3})} & \targ & \qw & \qw\\
\\ }}

%% file: graphics/rzz_chain_shrinked.tex
\scalebox{1.0}{
\Qcircuit @C=1.0em @R=0.2em @!R { \\
	\nghost{{qb\_63} :  } & \lstick{{q_0} :  } & \ctrl{4} & \qw & \ctrl{4} & \qw\\
	\nghost{{qb\_64} :  } & \lstick{{q_1} :  } & \targ & \gate{\mathrm{R_Z}(\gamma_0)} &      \targ& \qw\\
        \nghost{{qb\_64} :  } & \lstick{{q_2} :  } & \targ & \gate{\mathrm{R_Z}(\gamma_1)} & \targ & \qw\\
        \nghost{{qb\_64} :  } & \lstick{{q_3} :  } & \targ & \gate{\mathrm{R_Z}(\gamma_2)} & \targ & \qw\\
        \nghost{{qb\_64} :  } & \lstick{{q_4} :  } & \targ & \gate{\mathrm{R_Z}(\gamma_3)} & \targ & \qw\\
\\ }}

%% file: graphics/qaoa_circuit_par.tex
\scalebox{0.9}{
\Qcircuit @C=0.9em @R=0.25em @!R { \\
	 	\nghost{{q0} :  } & \lstick{{q_0} :  } & \gate{\mathrm{H}} & \ctrl{1} & \dstick{\hspace{2.0em}\mathrm{ZZ}\,(\mathrm{\gamma})} \qw & \qw & \qw & \qw & \qw & \qw & \qw & \control \qw & \qw & \qw & \qw & \gate{\mathrm{R_X}\,(\mathrm{\beta})} & \qw & \qw & \qw & \qw & \qw \\
	 	\nghost{{q1} :  } & \lstick{{q_1} :  } & \gate{\mathrm{H}} & \control \qw & \qw & \qw & \qw & \ctrl{1} & \dstick{\hspace{2.0em}\mathrm{ZZ}\,(\mathrm{\gamma})} \qw & \qw & \qw & \qw & \qw & \qw & \qw & \ctrl{2} & \qw & \qw & \qw & \gate{\mathrm{R_X}\,(\mathrm{\beta})} & \qw \\
	 	\nghost{{q2} :  } & \lstick{{q_2} :  } & \gate{\mathrm{H}} & \qw & \qw & \qw & \qw & \control \qw & \qw & \qw & \qw & \qw & \dstick{\hspace{2.0em}\mathrm{ZZ}\,(\mathrm{\gamma})} \qw & \qw & \qw & \qw & \dstick{\hspace{2.0em}\mathrm{ZZ}\,(\mathrm{\gamma})} \qw & \qw & \qw & \gate{\mathrm{R_X}\,(\mathrm{\beta})} & \qw \\
	 	\nghost{{q3} :  } & \lstick{{q_3} :  } & \gate{\mathrm{H}} & \ctrl{1} & \dstick{\hspace{2.0em}\mathrm{ZZ}\,(\mathrm{\gamma})} \qw & \qw & \qw & \qw & \qw & \qw & \qw & \ctrl{-3} & \qw & \qw & \qw & \control \qw & \qw & \qw & \qw & \gate{\mathrm{R_X}\,(\mathrm{\beta})} & \qw \\
	 	\nghost{{q4} :  } & \lstick{{q_4} :  } & \gate{\mathrm{H}} & \control \qw & \qw & \qw & \qw & \gate{\mathrm{R_X}\,(\mathrm{\beta})} & \qw & \qw & \qw & \qw & \qw & \qw & \qw & \qw & \qw & \qw & \qw & \qw & \qw \\
\\ }}

%% file: graphics/qaoa_circuit.tex
\scalebox{0.95}{
\Qcircuit @C=0.9em @R=0.23em @!R { \\
	 	\nghost{{q0} :  } & \lstick{{q_0} :  } & \gate{\mathrm{H}} & \ctrl{1} & \dstick{\hspace{2.0em}\mathrm{ZZ}\,(\mathrm{\gamma})} \qw & \qw & \qw & \qw & \qw & \qw & \qw & \qw & \qw & \qw & \qw & \qw & \qw & \qw & \qw & \control \qw & \qw & \qw & \qw & \gate{\mathrm{R_X}\,(\mathrm{\beta})} & \qw \\
	 	\nghost{{q1} :  } & \lstick{{q_1} :  } & \gate{\mathrm{H}} & \control \qw & \qw & \qw & \qw & \ctrl{1} & \dstick{\hspace{2.0em}\mathrm{ZZ}\,(\mathrm{\gamma})} \qw & \qw & \qw & \ctrl{2} & \qw & \qw & \qw & \gate{\mathrm{R_X}\,(\mathrm{\beta})} & \qw & \qw & \qw & \qw & \qw & \qw & \qw & \qw & \qw \\
	 	\nghost{{q2} :  } & \lstick{{q_2} :  } & \gate{\mathrm{H}} & \qw & \qw & \qw & \qw & \control \qw & \qw & \qw & \qw & \qw & \dstick{\hspace{2.0em}\mathrm{ZZ}\,(\mathrm{\gamma})} \qw & \qw & \qw & \gate{\mathrm{R_X}\,(\mathrm{\beta})} & \qw & \qw & \qw & \qw & \dstick{\hspace{2.0em}\mathrm{ZZ}\,(\mathrm{\gamma})} \qw & \qw & \qw & \qw & \qw \\
	 	\nghost{{q3} :  } & \lstick{{q_3} :  } & \gate{\mathrm{H}} & \qw & \qw & \qw & \qw & \qw & \qw & \qw & \qw & \control \qw & \qw & \qw & \qw & \ctrl{1} & \dstick{\hspace{2.0em}\mathrm{ZZ}\,(\mathrm{\gamma})} \qw & \qw & \qw & \ctrl{-3} & \qw & \qw & \qw & \gate{\mathrm{R_X}\,(\mathrm{\beta})} & \qw \\
	 	\nghost{{q4} :  } & \lstick{{q_4} :  } & \gate{\mathrm{H}} & \qw & \qw & \qw & \qw & \qw & \qw & \qw & \qw & \qw & \qw & \qw & \qw & \control \qw & \qw & \qw & \qw & \gate{\mathrm{R_X}\,(\mathrm{\beta})} & \qw & \qw & \qw & \qw & \qw \\
\\ }}

%% file: sections/memory_management.tex
\subsection{Memory Management}
\label{sec:memory_management}
This section treats the problem of reordering the operation sequence such that the amount of required Qubits is optimized. To present this feature, the section is divided into three subsections. 
\begin{itemize}
    \item In subsection~\ref{sec:why_order}, we elaborate how and why reordering a circuit can have influence on the amount of required qubits. 
    \item Similar to the section on parallelization, in subsection~\ref{sec:flex_sort}, we construct a specialized algorithm for topological sorting called Flex-Sort, which caters to our requirements.
    \item In subsection~\ref{sec:dealloc_order}, we demonstrate how Flex-Sort is applied to the permeability DAG.
\end{itemize}

\subsubsection{Why order matters}
\label{sec:why_order}
Within \textit{Qrisp} the allocation events are triggered by calling the \texttt{QuantumVariable} constructor and deallocations can be achieved by calling either the \texttt{.delete} or \texttt{.uncompute} method. On the level of the intermediate representation, (de)allocations are treated as a particular kind of operation among the regular quantum gates. To elaborate how topological sorting can help to acquire a good allocation strategy, we introduce the following concepts:
\begin{itemize}
    \item A \textbf{algorithmic qubit} is a qubit which takes a particular role in an algorithm. Algorithmic qubits are (de)allocated once.
    \item An \textbf{execution qubit} is a qubit in the optimized quantum circuit and can host multiple algorithmic Qubits if their lifetime falls in distinct steps of the algorithm.
\end{itemize}
To map a sequence of gates $S$ operating on algorithmic qubits to a quantum circuit with execution qubits, we apply the following procedure:
\begin{enumerate}
    \item Analyze $S$ regarding how many algorithmic qubits are allocated during peak load.
    \item Create a new quantum circuit with that many execution qubits and successively insert the corresponding quantum gates. During this procedure the pool of available execution qubits is managed in the following way\footnote{By creating a circuit with even more than the required execution qubits, it is possible to give the compiler the opportunity to choose the allocated qubit from a bigger pool of choices. This can significantly reduce circuit depth as the load can be distributed. This feature is described as ``workspace'' within \cite{seidel_2024_qrisp}.}
    \begin{enumerate}
        \item The \textbf{allocation} of an algorithmic qubit $q_{alg}$ chooses an execution qubit $q_{ex}$ from the pool according to a certain heuristic\footnote{To decide about the most suitable execution qubit, we leverage the concept of dynamic qubit depth (defined earlier) to determine which of the available execution qubits would be free the earliest in an actual execution of the quantum circuit that has been compiled so far.}. All subsequent operations on $q_{alg}$ are executed on $q_{ex}$ until deallocation.
        \item The \textbf{deallocation} of an algorithmic qubit $q_{alg}$ returns the corresponding $q_{ex}$ to the pool.
    \end{enumerate}
\end{enumerate}
From this construction we see that the amount of required qubits can be influenced by changing the order of allocations. In particular, if a deallocation event can be ``pulled in front'' of an allocation event, which would trigger peak load, the amount of required qubits decreases by one. \\
\subsubsection{Flex-Sort}
\label{sec:flex_sort}
Our goal is therefore to find a reordering of the gate sequence such that deallocations are executed as early as possible and allocations are executed as late as possible. We achieve this reformulating another strategy for DAG linearization: Depth-first traversal \cite{intro_to_algos_2009}.
\begin{algorithm}[t]
\caption{Depth First Topological Sort}
\SetKwFunction{Fvisit}{visit}
\SetKwProg{Fn}{Function}{:}{}
\KwData{A directed acyclic graph G = (E,V)}
\KwResult{A list $L$ of nodes in topologically sorted order}

$L \leftarrow$ empty list to store the sorted nodes\;
\While{nodes without a permanent mark exist}{
    select an arbitrary unmarked node $n$\;
    \Fvisit{$n$}\
}
\Return{L}

\Fn{\Fvisit{node $n$}}{
    \If{$n$ has a permanent mark}{
        \KwRet
    }
    \If{$n$ has a temporary mark}{
        \textbf{stop}
        {graph contains a cycle\;}
    }

    mark $n$ with a temporary mark\;
    
    \For{each $m \in V, (n,m) \in E$}{
        \Fvisit{$m$}\;
    }
    mark $n$ with a permanent mark\;
    add $n$ to head of $L$\;
}
\end{algorithm}

To make this algorithm useful for our purposes, we note that the \texttt{visit} function in its essence creates a linearization of the subgraph $\text{\texttt{desc}}(G, n)$\footnote{$\texttt{desc}(G,n)$ stands for the descendants subgraph of $G$ in $n$ and comprises all nodes that are reachable starting in $n$} and inserts it at the head of the result list L. Instead of using multiple recursions of the \texttt{visit} function, we now generalize the algorithm such that an arbitrary blackbox topological sorting (TS) algorithm can be used as a ``backend''.

\begin{algorithm}[t]
\caption{Flex-Sort}
\SetKwFunction{Fvisit}{visit}
\SetKwProg{Fn}{Function}{:}{}
\KwData{A directed acyclic graph G = (E,V)\\
An arbitrary topological sort algorithm TS}
\KwResult{A list $L$ of nodes in topologically sorted order}

$L \leftarrow$ empty list to store the sorted nodes\;
\While{G contains at least one node}{
    select an arbitrary node $n$\;
    $K \leftarrow$ TS(\texttt{anct}(G, n))\;
    extend L by K\;
    remove all K from G\;
}
\Return{L}
\end{algorithm}
Note that instead of using the descendants subgraph, we used the ancestors, which is essentially the descendants of the transposed graph\footnote{The transpose of a directed graph $G$ is the graph that is acquired when flipping all the edges in the opposite direction.}. To make up for this deviation we \textbf{extend} the resulting list L instead of inserting the entries at the head of L. We denote this algorithm ``Flex-Sort'' because it gives us the following flexibility during execution:
\begin{itemize}
    \item We can choose the backend TS.
    \item For each iteration, we can choose a suitable initial node $n$.
\end{itemize}
An obvious choice for TS is the parallelization algorithm\footnote{In our implementation we avoid callig the parallelization algorithm $|D|$ times. Instead we call the parallelization algorithm before memory management once and assign each node an integer label corresponding to the nodes position in the linearization. During memory management, this label is then used as a sorting key for a regular sorting algorithm to emulate a topological sort.}, however any other topological sorting algorithm is possible too.
As elaborated above, it is our goal to execute deallocation nodes as early as possible, which is why we select this node type as the prioritized starting nodes.

\begin{figure*}[t!]
    \begin{subfigure}[b]{0.5\textwidth}
        \centering
        \input{graphics/mm_circuit}
        \caption{\label{fig:mm_circuit}}
    \end{subfigure}%
    \begin{subfigure}[b]{0.5\textwidth}
        \centering
        \includegraphics[width = 0.7 \textwidth]{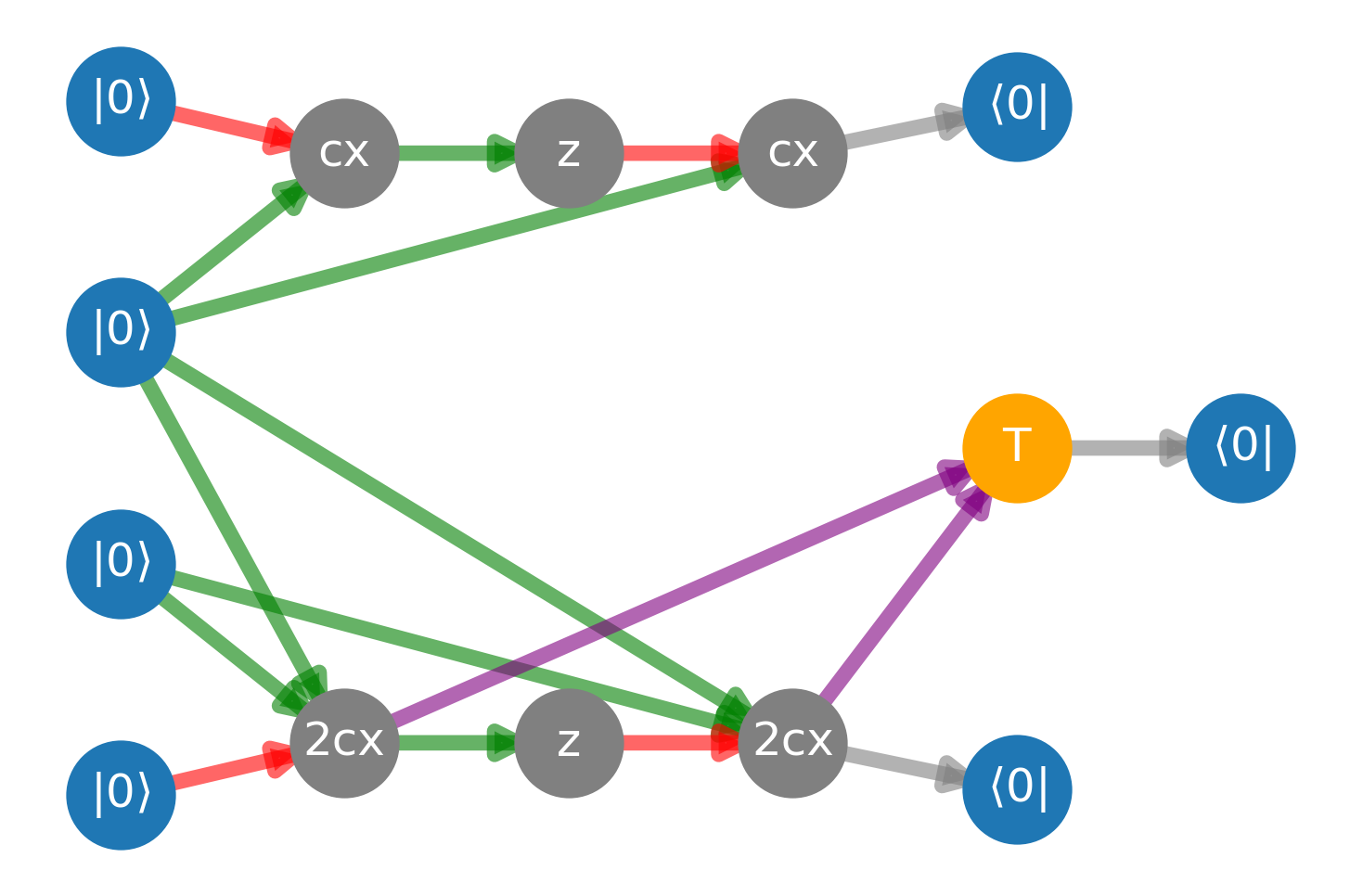}
        \caption{\label{fig:mm_dag}}
    \end{subfigure}
    \\
    \\
    \centering
    \begin{subfigure}[c]{\textwidth}
        \hspace{7em}
        \input{graphics/mm_compressed_circuit}
        \caption{\label{fig:mm_circuit_compressed}}
    \end{subfigure}
    \caption{\label{fig:mm_example}\ref{fig:mm_circuit} A simple circuit containing some deallocations (marked by the $\bra{0}$ gates). \ref{fig:mm_dag} The permeability DAG of \ref{fig:mm_circuit}. Note that the ancestors of the highest deallocation node contain only two allocation nodes. The ancestors of the lower two contain three. Based on this information, our compiler chooses to perform the necessary quantum gates to perform the upper allocation as early as possible, allocating only the minimal amount of qubits. \ref{fig:mm_circuit_compressed} The resulting circuit after applying the memory management topological sorting algorithm to the permeability DAG in \ref{fig:mm_dag}.}
\end{figure*}

\subsubsection{Deallocation order}
\label{sec:dealloc_order}
But how to choose the order of deallocation nodes that suits our efforts the most? For that we investigate the DAG a bit more (prior to sorting). In particular, we generate the \textit{ancestors} subgraph for each \textit{deallocation} node and count how many \textit{allocations} are required to execute that particular deallocation. The deallocation nodes are then ranked according to the amount of allocations that are required to execute them. 

We summarize our procedure: Given is a sequence of operations $S$, a set of allocation events $A \subset S$, a set of deallocation events $D \subset S$.
The following steps are taken to generate an equivalent reordering, which executes the deallocations as early as possible and the allocations as late as possible.
\begin{enumerate}
    \item Generate the permeability DAG $G$ of $S$
    \item For each deallocation node $d$ determine the ancestors subgraph \texttt{anct}$(G, d)$ and count how many allocations are contained in that subgraph. Denote this amount $|A(d)|$.
    \item Create a list of deallocation nodes and sort it according to the sorting key $|A(d)|$.
    \item Execute the Flex-Sort algorithm by picking the nodes of the list from the previous step as initial nodes.
    \item Iterate through the sequence generated in the previous step and dynamically (de)allocate execution qubits as described in Section \ref{sec:why_order}.
\end{enumerate}
For an example of this procedure please consider Fig.~\ref{fig:mm_example}. Benchmarking this algorithm is difficult since benchmark sets like \cite{quetschlich_2023} contain no deallocation information. In practice we could verify for a variety of examples that our algorithm finds an allocation strategy requiring the optimal amount of qubits. This is however not always the case.
\subsubsection{Performance analysis}
The above described procedure for establishing a memory management strategy is more demanding than the parallelization technique. It is however not the topological sorting itself, which is more expensive\footnote{This is because the overall amount of edges can only be lower if the graph is cut into several ancestor subgraphs.} but instead determining the order of deallocation nodes, because this requires computing the ancestors graph for every deallocation node. In the worst case the complexity is therefore $\mathcal{O}(|S|\cdot |D|)$. However this task can be parallelized (one thread for each deallocation node). Within 
\textit{Qrisp} this bottleneck is remedied by a parallel Numba \cite{numba_2015} kernel, which performs our technique with barely any delay for most mid-term quantum algorithms. Additionally, the ancestors function is available in the cuGraph framework \cite{kang_2022}, which allows to outsource this task to clusters of GPU hardware, implying even circuits with billions of gates might be treatable.

%% file: graphics/mm_circuit.tex
\scalebox{1.0}{
\Qcircuit @C=1.0em @R=0.2em @!R { \\
	 	\nghost{{qb\_287} :  } & \lstick{{q_0} :  } & \targ & \gate{\mathrm{Z}} & \qw & \qw & \targ & \gate{\bra{0}} & \qw & \qw\\
	 	\nghost{{qb\_288} :  } & \lstick{{q_1} :  } & \ctrl{-1} & \ctrl{1} & \qw & \ctrl{1} & \ctrl{-1} & \qw & \qw & \qw\\
	 	\nghost{{qb\_289} :  } & \lstick{{q_2} :  } & \qw & \ctrl{1} & \qw & \ctrl{1} & \gate{\bra{0}} & \qw & \qw & \qw\\
	 	\nghost{{qb\_290} :  } & \lstick{{q_3} :  } & \qw & \targ & \gate{\mathrm{Z}} & \targ & \gate{\bra{0}} & \qw & \qw & \qw\\
\\ }}

%% file: graphics/mm_compressed_circuit.tex
\scalebox{1.0}{
\Qcircuit @C=1.0em @R=0.2em @!R { \\
	 	\nghost{{qbl.0} :  } & \lstick{q_0 :  } & \ctrl{1} & \qw & \ctrl{1} & \qw & \ctrl{1} & \qw & \ctrl{1} & \qw & \qw & \qw\\
	 	\nghost{{workspace\_0} :  } & \lstick{{q_1} :  } & \targ & \gate{\mathrm{Z}} & \targ & \gate{\bra{0}} & \targ & \gate{\mathrm{Z}} & \targ & \gate{\bra{0}} & \qw & \qw\\
	 	\nghost{{workspace\_1} :  } & \lstick{q_2 :  } & \qw & \qw & \qw & \qw & \ctrl{-1} & \qw & \ctrl{-1} & \gate{\bra{0}} & \qw & \qw\\
\\ }}

%% file: sections/outlook.tex
\section{Outlook}
Even though the present article describes the permeability DAG in terms of parallelization and memory management, more use-cases are possible:
\begin{enumerate}
    \item \textbf{Light-cone reduction}. Viewing the permeability DAG as a representation of causally related events, it is possible to filter out gates which do not influence measurement results. In a different words: It doesn't matter whether a gate outside of the light-cone of a measurement is executed. We can use this fact to optimize quantum circuits in that we remove every node from the permeability DAG that is outside of the light-cone of the intended measurements.
    \item \textbf{Peephole optimizations}. The permeability DAG can help with identifying peephole optimizations that require adjacency. One example of this is the simple gate sequence $(Z, S, Z)$. The Z gates cancel out but only because the sequence can be arbitrarily reordered due to permeability features. Less trivial examples could include cancellation/merging of CX/CP gates or even higher level semantics such as fusing two quantum adders.
    \item \textbf{Dynamic parallelization}. For circuits that contain very long Z-permeable streaks\footnote{An example for this phenomenon is the controlled call of a composite quantum function. Every subroutine of this function is individually controlled such that control qubit access is a bottleneck.} it is possible to ``quasi-copy'' the value of the streak qubit and therefore enable the access to the streak qubit value via multiple qubits, facilitating an improved circuit depth. Quasi-copy means duplicating a computational basis state by applying a CX gate into a freshly allocated qubit.
\end{enumerate}
Apart from the mentioned potential use-cases, further questions regarding the permeability DAG remain open:\\
Within Section~6 of \cite{seidel_2024_qrisp}, we describe our efforts to restructure \textit{Qrisp} to remedy compilation speed bottlenecks but also to enable seamless hybrid algorithm compilation. For this we leverage the Jax \cite{jax} framework, which exposes an interface for an extensible dynamic intermediate representation called \textit{Jaxpr}. Quantum algorithms expressed through Jaxprs can not only contain classical control flow (such as loops or conditionals) but also purely classical functions. As such we are faced with the question how these Jaxpr features can be embedded into the permeability DAG framework to facilitate the described compilation algorithms also for hybrid settings.

%% file: sections/summary.tex
\section{Summary}
Within the present work we gave a rigorous definition of the concepts of Z/X-permeability, including a proof that permeable gates commute if they only intersect on permeable inputs. To make use of the induced commutation relations, we constructed the permeability DAG, which resembles the DAG defined in the Unqomp algorithm \cite{paradis_2021}. Next to the synthesis of uncomputation, the permeability DAG can be used to find equivalent reorderings of the circuit by applying a topological sorting algorithm. We describe two such sorting algorithms, which are constructed to optimize certain features of the resulting quantum circuit. The first algorithm parallelizes quantum function calls to reduce the depth of the overall circuit, whereas the second method reorders the program instructions to reduce peak quantum memory consumption. For both of these algorithms we gave a rough complexity analysis. Our implementation in the \textit{Qrisp} framework shows that for near/mid-term quantum algorithms, the introduced delay of the procedures is negligible. Finally, we elaborated on further applications of the permeability DAG and discussed its implementation based on a more dynamic IR.

%% file: sections/appendix.tex
\section*{Appendix}
\section{Proof of Theorem \ref{commutativity_theorem}}
\label{sec:x_com_proof}
This appendix contains the proofs for Theorem~\ref{commutativity_theorem} from Section~\ref{sec:permeability}. In order to prove Theorem~\ref{commutativity_theorem}, we first need the following theorem.

\begin{tcolorbox}
\begin{theorem}
    \label{expansion_theorem}
    Let $U \in U(2^n)$ be an $n$-qubit operator. If $U$ is Z-permeable on the first $p$ qubits, there are operators $\tilde{U}_0, \tilde{U}_1,\dotsc, \tilde{U}_{2^p -1}$ such that:
    \begin{align}
        \label{expansion}
        U = \sum_{i = 0}^{2^p-1} \ket{i}\bra{i} \otimes \tilde{U}_i.
    \end{align}
\end{theorem}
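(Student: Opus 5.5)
Since $U$ commutes with each $Z_j$ for $j<p$, it commutes with every product of them. These products generate the commutative algebra of operators diagonal in the computational basis of the first $p$ qubits, tensored with the identity on the remaining $n-p$ qubits. The plan is to show that commuting with this algebra forces $U$ to be block diagonal with respect to the decomposition
\[
\mathbb{C}^{2^n} = \bigoplus_{i=0}^{2^p-1} \ket{i}\otimes\mathbb{C}^{2^{n-p}},
\]
and then to read off the blocks as $\tilde U_i$.

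\textbf{Step 1: projectors.} Write each projector $\Pi_i = \ket{i}\bra{i}\otimes \mathbbm{1}^{\otimes n-p}$ as a polynomial in the $Z_j$:
\[
\ket{b}\bra{b} = \tfrac12\bigl(\mathbbm{1} + (-1)^b Z\bigr)
\]
on a single qubit, so $\Pi_i$ is the product over $j<p$ of $\tfrac12(\mathbbm{1}+(-1)^{i_j}Z_j)$, where $i_j$ is the $j$-th bit of $i$. Since $U$ commutes with each $Z_j$, it commutes with each factor and hence with every $\Pi_i$.

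\textbf{Step 2: expansion.} Insert the resolution of identity $\sum_i \Pi_i=\mathbbm{1}$ on both sides:
\[
U=\sum_{i,k}\Pi_i U \Pi_k .
\]
For $i\neq k$ we have $\Pi_i U\Pi_k = U\Pi_i\Pi_k = 0$ by commutation and orthogonality of the projectors. So
\[
U=\sum_i \Pi_i U\Pi_i .
\]
Define
\[
\tilde U_i := (\bra{i}\otimes\mathbbm{1})\,U\,(\ket{i}\otimes\mathbbm{1}),
\]
an operator on $n-p$ qubits. A direct check gives $\Pi_i U \Pi_i = \ket{i}\bra{i}\otimes\tilde U_i$, which is exactly the claimed form.

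\textbf{Step 3: unitarity (optional).} Each $\tilde U_i$ is in fact unitary, because $U$ is a unitary that preserves each block. The statement only asks for operators, though, so this is not needed.

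There is no serious obstacle. The only step needing care is Step 1, the polynomial expression of $\Pi_i$, which makes commutation with the projectors rigorous rather than heuristic. An alternative is to argue by matrix entries: $Z_jU=UZ_j$ gives
\[
\bigl((-1)^{x_j}-(-1)^{y_j}\bigr)\bra{x}U\ket{y}=0,
\]
so matrix elements vanish whenever $x$ and $y$ differ in any of the first $p$ bits.
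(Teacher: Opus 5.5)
Your proof is correct, but it takes a different route from the paper.

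The paper argues by induction on $p$. For $p=1$ it inserts $\sum_i\ket{i}\bra{i}$ on both sides of $U$ to write $U=\sum_{i,j}\ket{i}\bra{j}\otimes\hat U_{ij}$. It then expands $Z_0U-UZ_0$ into $\sum_{i,j}((-1)^i-(-1)^j)\ket{i}\bra{j}\otimes\hat U_{ij}=0$ and concludes $\hat U_{01}=\hat U_{10}=0$. In the induction step it substitutes the $p_0$-expansion, argues that each block $\tilde U_i$ commutes with $Z$ on its own first qubit, reapplies the $p=1$ case blockwise, and merges the indices.

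You instead write each single-qubit projector $\ket{b}\bra{b}$ as an affine combination of the identity and $Z$. This shows that $U$ commutes with all $2^p$ projectors $\Pi_i$ at once, and a single resolution-of-identity step then removes every off-diagonal block.

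What your route buys:
\begin{itemize}
\item The proof is non-inductive, with no index bookkeeping.
\item It gives an explicit formula for $\tilde U_i$ as the $(i,i)$ block of $U$.
\item It avoids the paper's induction step, which as printed has a typo: it writes $Z_{p_0+1}\tilde U_i-Z_{p_0+1}\tilde U_i$ where $Z\tilde U_i-\tilde U_iZ$ is meant. It also leaves implicit how $Z_{p_0+1}$ acts on $\ket{i}\bra{i}\otimes\tilde U_i$.
\end{itemize}

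Your matrix-element variant is really the paper's $p=1$ identity written entrywise and applied to all $j<p$ simultaneously. That is exactly why no induction is needed. The paper's version is a bare-hands block calculation that never requires recognising the projectors as polynomials in the $Z_j$; beyond that it gains no generality.

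As in the paper, unitarity of $U$ is never used. Your Step 3 is therefore indeed superfluous, and the expansion holds for any operator commuting with $Z_0,\dots,Z_{p-1}$.
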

\end{tcolorbox}

\begin{proof}
    We will treat the case $p = 1$ first and generalize via induction afterwards.\\
    We start by inserting identity operators $\mathbbm{1} = \sum_{i = 0} \ket{i}\bra{i}$:
    \begin{align}
        U = &\mathbbm{1} U \mathbbm{1} \\
        =& \sum_{i,j = 0}^1 \ket{i}\bra{i} U \ket{j}\bra{j}\\
        =& \sum_{i,j = 0}^1 \ket{i} \bra{j} \otimes \hat{U}_{ij}.
    \end{align}
    where $\hat{U}_{ij} = \bra{i} U \ket{j}$.
    Due to the Z-permeability condition, we have
    \begin{align}
    \begin{aligned}
        0 =& Z_0 U - U Z_0\\
        = & \left( \sum_{k =0}^1 (-1)^k \ket{k}\bra{k} \otimes \mathbbm{1}^{\otimes{n-1}}\right) \left(\sum_{i,j = 0}^1 \ket{i} \bra{j} \otimes \hat{U}_{ij}\right)\\
        -& \left( \sum_{i,j = 0}^1 \ket{i} \bra{j} \otimes \hat{U}_{ij}\right) \left(\sum_{k = 0}^1 (-1)^k \ket{k}\bra{k} \otimes \mathbbm{1}^{\otimes{n-1}}\right)\\
        = & \sum_{i,j,k = 0}^1 (-1)^k \left(\ket{k} \langle k | i \rangle \bra{j} \otimes \hat{U}_{ij} - \ket{i} \langle j | k \rangle \bra{k} \otimes \hat{U}_{ij}\right)\\
        = & \sum_{i,j,k = 0}^1 (-1)^k \left(\ket{k} \langle k | i \rangle \bra{j} - \ket{i} \langle j | k \rangle \bra{k}\right) \otimes \hat{U}_{ij}\\
        = & \sum_{i,j = 0}^1 \left((-1)^i \ket{i} \bra{j} - (-1)^j \ket{i} \bra{j}\right) \otimes \hat{U}_{ij}.
    \end{aligned}
    \end{align}
    From this form, we see that the index constellations, where $i = j$ cancel out. We end up with
    \begin{align}
        0 = 2 (\ket{0}\bra{1} \otimes \hat{U}_{01} - \ket{1}\bra{0} \otimes \hat{U}_{10}).
    \end{align}
    Since both summands act on disjoint subspaces, we conclude
    \begin{align}
        \hat{U}_{01} = 0 = \hat{U}_{10}.
    \end{align}
    Finally, we set
    \begin{align}
    \begin{aligned}
        \tilde{U}_0 = \hat{U}_{00}\\
        \tilde{U}_1 = \hat{U}_{11}
    \end{aligned}
    \end{align}
    yielding the claim for p = 1.
    To complete the proof we give the induction step, that is, we prove the claim for $p = p_0 + 1$ under the assumption that it is true for $p = p_0$:
    Since $U$ is permeable on qubit $p_0 + 1$, we have
    \begin{align}
        0 =& Z_{p_0 + 1} U - U Z_{p_0 + 1}
    \end{align}
    As the claim is true for $p = p_0$, we insert
    \begin{align}
        \label{small_expansion}
        U = \sum_{i = 0}^{2^{p_0} - 1} \ket{i}\bra{i} \otimes \tilde{U}_i
    \end{align}
    yielding
    \begin{align}
        0 = \sum_{i = 0}^{2^{p_0} - 1} \ket{i}\bra{i} \otimes (Z_{p_0 +1} \tilde{U}_i - Z_{p_0 +1} \tilde{U}_i )
    \end{align}
    Since each of the summand operators acts on disjoint subspaces, we conclude
    \begin{align}
        0 =  Z_{p_0 +1} \tilde{U}_i - Z_{p_0 +1} \tilde{U}_i
    \end{align}
    This, as shown above, implies
    \begin{align}
        \tilde{U}_i = \sum_{j = 0}^{1} \ket{j}\bra{j} \otimes (\tilde{U}_i)_j.
    \end{align}
    Finally, we insert this form into eq.~\ref{small_expansion} to retrieve the claim for $p = p_0 + 1$:
    \begin{align}
        U = \sum_{i = 0}^{2^{p_0+1} - 1} \ket{i}\bra{i} \otimes \tilde{U}_i
    \end{align}
\end{proof}

Having proved the above theorem, the next step is to employ it in the proof
of Theorem~\ref{commutativity_theorem} for Z-permeability.

\begin{proof}
According to Theorem \ref{expansion_theorem} we can write
\begin{align}
    (U \otimes \mathbbm{1}^{\otimes m-p}) = \sum_{i = 0}^{2^p-1} \tilde{U}_i \otimes \ket{i}\bra{i} \otimes  \mathbbm{1}^{\otimes m-p}\\
    (\mathbbm{1}^{\otimes n-p} \otimes V) = \sum_{j = 0}^{2^p-1} \mathbbm{1}^{\otimes n-p} \otimes \ket{j}\bra{j} \otimes \tilde{V}_j
\end{align}
Multiplying these operators gives
\begin{align}
\begin{aligned}
    &(U \otimes \mathbbm{1}^{\otimes m-p}) (\mathbbm{1}^{\otimes n-p} \otimes V)\\
    = &\left( \sum_{i = 0}^{2^p-1} \tilde{U}_i \otimes \ket{i}\bra{i} \otimes \mathbbm{1}^{\otimes m-p} \right)\\&\left(\sum_{j = 0}^{2^p-1} \mathbbm{1}^{\otimes n-p} \otimes \ket{j}\bra{j} \otimes \tilde{V}_j\right)\\
    = &\sum_{i,j = 0}^{2^p-1}  \tilde{U}_i \otimes \ket{i} \langle i|j \rangle  \bra{j} \otimes \tilde{V}_j\\
    = &\sum_{i = 0}^{2^p-1}  \tilde{U}_i \otimes \ket{i} \bra{i} \otimes \tilde{V}_i
\end{aligned}
\end{align}
Multiplication in reverse order yields the same result:
\begin{align}
\begin{aligned}
    &(\mathbbm{1}^{\otimes n-p} \otimes V) (U \otimes \mathbbm{1}^{\otimes m-p})\\
    = & \left(\sum_{j = 0}^{2^p-1} \mathbbm{1}^{\otimes n-p} \otimes  \ket{j}\bra{j} \otimes 
 \tilde{V}_j\right)\\ & \left(\sum_{i = 0}^{2^p-1} \tilde{U}_i \otimes  \ket{i}\bra{i} \otimes 
 \mathbbm{1}^{\otimes m-p}\right) \\
    = &\sum_{i,j = 0}^{2^p-1} \tilde{U}_i \otimes  \ket{j} \langle j|i \rangle  \bra{i} \otimes \tilde{V}_j\\
    = &\sum_{i = 0}^{2^p-1}  \tilde{U}_i \otimes  \ket{i} \bra{i} \otimes \tilde{V}_i
\end{aligned}
\end{align}
From this we conclude the claim.
\end{proof}

Finally, we generalize the proof from just Z-permeability to X-permeability.
\begin{proof}
    Let $U, V$ satisfy the given conditions for X-permeability. We define two auxiliary operators
    \begin{align}
        \tilde{U} &= H_{\geq n-p}  U H_{\geq n-p}\\
        \tilde{V} &= H_{<p} V H_{<p}
    \end{align}
    The notation $H_{<p}$ here implies that Hadamard gates are applied to all qubits with an index $<p$.
    We observe that $\tilde{U}, \tilde{V}$ are both Z-permeable on the relevant qubits. To see this, let $0\leq k< p$:
    \begin{align}
    \begin{aligned}
        Z_k \tilde{V} &= H_k X_k H_k \tilde{V}\\
        &=  H_k X_k H_{<p, \neq k} V H_{<p}\\
        &= H_k H_{<p, \neq k} X_k V H_{<p}\\
        &= H_{<p} V X_k H_k H_{<p, \neq k}\\
        &= H_{<p} V H_{<p} H_k X_k H_k\\
        &= H_{<p} V H_{<p} Z_k\\
        &= \tilde{V} Z_k
    \end{aligned}
    \end{align}
    Obviously, a similar reasoning holds vor $\tilde{U}$.
    Using the Z-permeability commutativity theorem, we deduct that $\tilde{U}, \tilde{V}$ commute if they only intersect on the relevant qubits.
    \begin{align}
    \begin{aligned}
        (\tilde{U} \otimes \mathbbm{1}^{\otimes m-p})& (\mathbbm{1}^{\otimes n-p} \otimes \tilde{V})\\
        &=\\
        (\mathbbm{1}^{\otimes n-p} \otimes \tilde{V})& (\tilde{U} \otimes \mathbbm{1}^{\otimes m-p})
    \end{aligned}
    \end{align}
    By wrapping both sides of the equation into Hadamard gates, we obtain the statement:
        \begin{align}
        \begin{aligned}
        H_{\geq n-p, <n}(\tilde{U} \otimes \mathbbm{1}^{\otimes m-p})& (\mathbbm{1}^{\otimes n-p} \otimes \tilde{V})H_{\geq n-p, <n}\\
        &=\\
        H_{\geq n-p, <n}(\mathbbm{1}^{\otimes n-p} \otimes \tilde{V})& (\tilde{U} \otimes \mathbbm{1}^{\otimes m-p})H_{\geq n-p, <n}
        \\ &\Leftrightarrow\\
        (U \otimes \mathbbm{1}^{\otimes m-p})& (\mathbbm{1}^{\otimes n-p} \otimes V)\\
        &=\\
        (\mathbbm{1}^{\otimes n-p} \otimes V)& (U \otimes \mathbbm{1}^{\otimes m-p})
        \end{aligned}
    \end{align}

\end{proof}

\section{Permeability Graph Code}
\label{sec:p_dag_code}
In the following, we provide some \textit{Qrisp} code to reproduce or alter the plots in Fig.~\ref{fig:dag_example}.

\begin{minted}{python}
from qrisp import *
qc = QuantumCircuit(4)

# Z-permeable streak
qc.cy(0,1)
qc.cy(0,2)
qc.cy(0,3)

# X-permeable streak
qc.cx(1,0)
qc.x(0)
qc.mcx([2,3],0)

dag = PermeabilityGraph(qc)
dag.draw()
\end{minted}
As a subclass of \texttt{networkx.DiGraph}, the permeability graph object can be processed by a variety of Networkx algorithms.